\newcounter{sarrow}
\newcommand{\indep}{\perp \!\!\! \perp}
\newtheorem{theorem}{\textsf{Theorem}}
\newtheorem{definition}{\textsf{Definition}}
\newtheorem{corollary}{\textsf{Corollary}} 
\newtheorem*{example}{\textsf{Example}}
\newtheorem{Rule}{\underline{\textsf{Rule}}} 
\newcommand{\myuline}[1]{%
  \uline{\phantom{#1}}%
  \llap{\contour{white}{#1}}%
}
\renewcommand{\emph}{\textit}
\begin{document}

\title{On safe post-selection for Bell tests with ideal detectors:\newline Causal diagram approach}

\author{Pawel Blasiak}
\affiliation{Institute of Nuclear Physics Polish Academy of Sciences, PL-31342 Krak\'ow, Poland}
\orcid{0000-0002-3457-2870}
\email{pawel.blasiak@ifj.edu.pl}
\author{Ewa Borsuk}
\affiliation{Institute of Nuclear Physics Polish Academy of Sciences, PL-31342 Krak\'ow, Poland}
\orcid{0000-0003-0446-2782}
\email{ewa.borsuk@ifj.edu.pl}
\author{Marcin Markiewicz}
\affiliation{International Centre for Theory of Quantum Technologies, University of Gda\'nsk, PL-80308 Gda\'nsk, Poland}
\orcid{0000-0002-8983-9077}
\email{marcinm495@gmail.com}

\maketitle

\begin{abstract}
\onecolumn Reasoning about Bell nonlocality from the correlations observed in post-selected data is always a matter of concern. This is because conditioning on the outcomes is a source of non-causal correlations, known as a \textit{selection bias}, rising doubts whether the conclusion concerns the actual causal process or maybe it is just an effect  of processing the data. Yet, even in the idealised case without detection inefficiencies, post-selection is an integral part of experimental designs, not least because it is a part of the entanglement generation process itself. In this paper we discuss a broad class of scenarios with post-selection on multiple spatially distributed outcomes. A simple criterion is worked out, called the \textit{all-but-one} principle, showing when the conclusions about nonlocality from breaking Bell inequalities with post-selected data remain in force. Generality of this result, attained by adopting the high-level diagrammatic tools of causal inference, provides safe grounds for systematic reasoning based on the standard form  of multipartite Bell inequalities in a wide array of entanglement generation schemes, without worrying about the dangers of selection bias. In particular, it can be applied to post-selection defined by single-particle events in each detection chanel when the number of particles in the system is conserved.
\end{abstract}
\vspace{0.3cm}
\twocolumn

\section{Introduction}
The study of experimental correlations provides a window into the underlying causal mechanisms, even when their exact nature remains obscured. In his seminal works~\cite{Be93}, John~Bell showed that seemingly innocuous assumptions about the causal structure of realistic models leave a mark on the observed statistics. The conclusion has been that the violation of certain inequalities is incompatible with the assumption of \textit{locality} and \textit{free choice} (or \textit{measurement independence}). Surprisingly, such violations can systematically occur in quantum theory, potentially undermining our dearly held assumptions about how nature works. Given how troubling this conclusion might be, it is hardly surprising how thoroughly Bell’s result has been scrutinised in the last few decades, both theoretically~\cite{Wi14a,BrCaPiScWe14,Sc19} and experimentally~\cite{As15}. Its importance is acknowledged by the term \textit{'Bell nonlocality'} which refers to experimental situations demonstrating inconsistency of the observed correlations with causal (or realist) explanations maintaining both assumptions at the same time. It is believed to be a source of quantum advantage in the communication~\cite{EkRe14} and information  tasks~\cite{NiCh00}.

A simplified picture of a Bell experiment consists of a series of measurements made by  space-time separated parties on systems prepared in some entangled state. 
However, it comes with a challenge as to the straight conclusion regarding Bell nonlocality when it comes to the analysis of real experimental designs. An important issue concerns the presence of post-selection in the data collection process. One source of the problem lies with the measurement part of the experiment in which some of the events are missed out  due to the inefficiencies of real detectors. It is called the detection loophole~\cite{Gi14c,La14a} and will not be addressed in this paper. In the following we shall assume ideal detectors and thus focus on post-selection due to the preparation part of the experiment. 

It is often the case that post-selection in the experiment is due to the specifics of the entanglement generation process itself. Typically this boils down to the occurrence of a certain pattern in the outcomes deemed to be interesting for the purpose at hand, i.e., exhibiting entanglement~\cite{PaChLuWeZeZu12,ErKrZe20,WaScLaTh20}. Some popular techniques of this sort include: heralding by some other event (cf., event-ready detection~\cite{ZuZeHoEk93}), time-bin entanglement~\cite{Fr89} or selecting single-particle detections in each experimental channel (cf., recent proposals in Refs.~\cite{ZhBaLuZhYaRuPa08,KrHoLaZe17,KyErHoKrZe20,BlMa19,KiPrChYaHaLeKa18,KiChLiHa20,BlBoMaKi21,StLiMoTu17,BeLoCo17,CaBeCoLo19,ZhLiLiPeSuHuHe18,WaQiDiChChYoHe19,LoCo18,NoCaCoLo20,SuWaLiXuXuLiGu20,BaChPrLiChHuKi20} or a variety of integrated photonic implementations~\cite{WaScLaTh20}). So, the generic structure of events is richer than that required for the intended Bell inequality, and post-selection aims at retaining only those experimental trials, based on some well-defined criterion, which are potentially interesting for the violation of the desired inequality. 
This poses an issue regarding the legitimacy of the conclusion about Bell nonlocality in such scenarios, since conditioning is often a source of non-causal correlations. In the field of causal inference the problem is known as a \textit{selection bias} or \textit{Berkson’s paradox}~\cite{Pe09,SpGlSc00,PeGlJe16,PeMa18}. The difficulty being that, in the presence of post-selection, it might be conditioning that leads to correlations breaking Bell inequalities without necessarily claiming Bell nonlocality. How critical it is for the analysis of Bell experiments may attest the effort to close the detection loophole~\cite{La14a}, which exploits post-selection due to detector inefficiencies. 
Here we will assume ideal detectors and focus only on post-selection due to entanglement generation.

It is interesting to ask about general conditions 
\begin{center}
\textit{\parbox{0.9\columnwidth}{\textsf{when post-selection, due to  entanglement generation, does not compromise the conclusion of nonlocality from the violation of some given Bell inequality.}}}
\end{center}
So far, this problem has been  discussed only for some particular scenarios for two and three parties, and the analysis typically involved the entire pattern of experimental outcomes present in a given experiment~\cite{PoHaZu97,Zu00,ScVaCaMa11}. In some cases, those issues can be overcome by certain modifications in the experimental arrangements (like for time-bin entanglement in Refs.~\cite{AeKwLaZu99,LiVaChCaMa10,CaCaSaCuFuToFi15,VeAgToAvLaVaVi18}). However, apart from those particular cases no attempt has been made at a general analysis of Bell nonlocality in the presence of post-selection due to the specifics of entanglement generation process. We note that  comprehensiveness of such an analysis would require a discussion of both assumptions, \textit{locality} and \textit{free choice}, underlying the derivation of Bell inequalities.

In this paper we give a general criterion for any multipartite scenario, called the \textit{all-but-one} principle, that can be expressed by the following simple intuition:
\begin{center}
\textit{\parbox{0.9\columnwidth}{\textsf{if post-selection can be resolved with one party excluded, then it is safe for Bell nonlocality arguments.}}}
\end{center}
Meaning that, in such a case, the reasoning based on the standard Bell inequalities is justified despite the issues of post-selection. Crucially, the generality of the result owes to the high-level diagrammatic tools of causal inference honed by Judea Pearl~\cite{Pe09,SpGlSc00,PeGlJe16,PeMa18}. We give a full poof of this criterion preceded with a brief discussion of the selection bias and Bell nonlocality under post-selection. 

\section{Selection bias and d-separation rules}

Post-selection is a procedure of  
rejecting some of the data from the analysis of an experiment. Technically, it boils down to estimating experimental probabilities subject to some additional condition which depends on the outcomes. 
It is crucial to make a warning that post-selection is not a harmless procedure, since it is often a source of additional correlations in the retained data. This is potentially dangerous for the task of identification of causal relationships between the variables from the observed correlations. In the field of causal inference the problem is known as a \textit{selection bias} or \textit{Berkson's paradox}. 

Let us illustrate the problem with a simple example due to Elwert \& Winship~\cite{PeMa18}. Consider three features of Hollywood actors who could be beautiful $B$, talented $T$, and some of them make it to be celebrities $C$. We may reasonably expect that beauty $B$ and talent $T$ contribute to an actor being considered a celebrity $C$ (pushed to the extreme, imagine that one of these features is enough to become a celebrity), but in general population beauty $B$ and talent $T$ are completely unrelated to one another. Suppose that this is the whole story and hence Fig.~\ref{Fig_SelectionBias} (on the left) illustrates the causal diagram behind the  data. 
Now, if we focus on the subpopulation of those actors who made it to the status of celebrities $C$, then correlation between beauty $B$ and talent $T$ appears (despite the fact that they were independent to begin with). See Fig.~\ref{Fig_SelectionBias} (on the right). Clearly, seeing an unattractive celebrity makes it more likely that the person is a talented actor.  And vice versa, celebrities who are bad actors are more often found to be good looking. (Note that pushed to the extreme, this inverse relation may even become a certain conclusion). These correlations are \textit{non-causal}, i.e., they arise merely due to conditioning or restricting the data generated by the causal diagram in Fig.~\ref{Fig_SelectionBias} (on the left). This example illustrates the warning against careless attribution of causal origin to correlations in the post-selected data. 

\begin{figure}
\centering
\includegraphics[width=0.9\columnwidth]{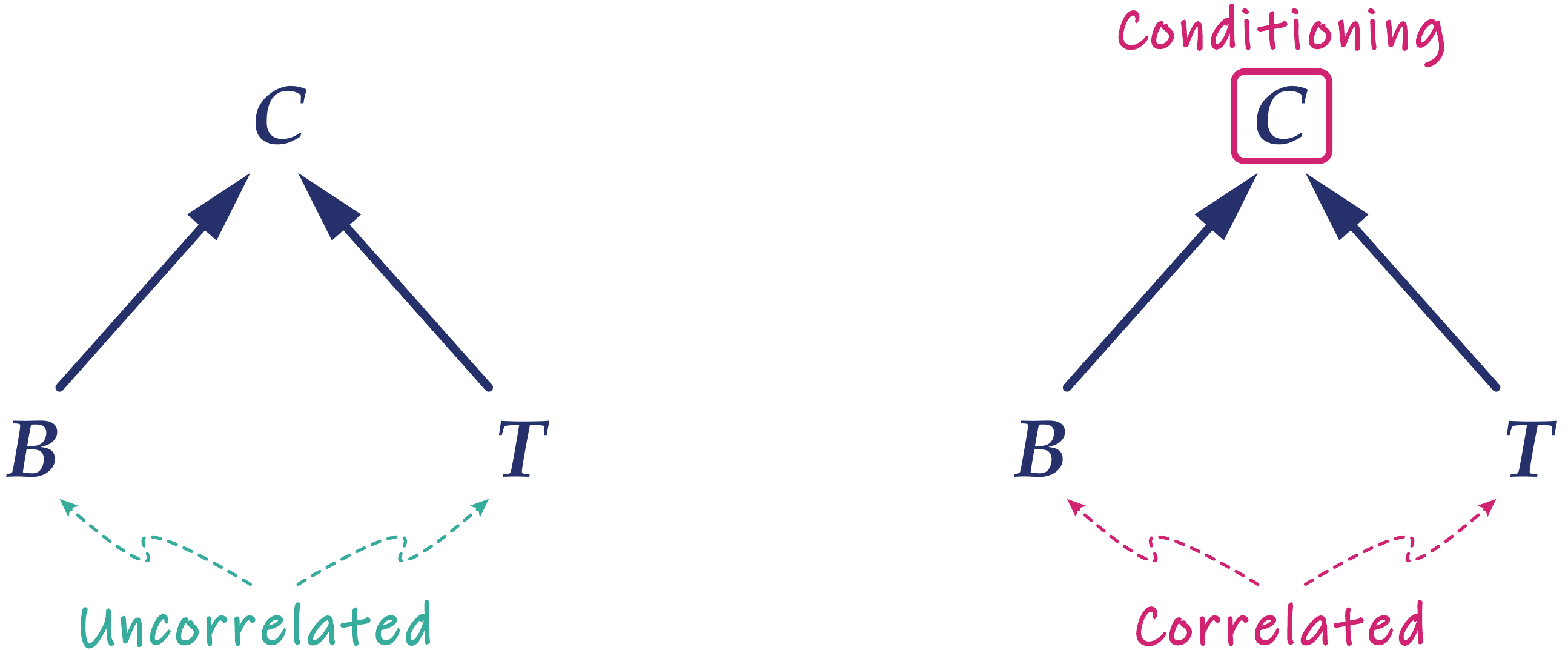}
\caption{\label{Fig_SelectionBias}{\bf\textsf{\mbox{Selection bias.}}} Consider three variables $B$ (beauty), $T$ (talent) and $C$ (celebrity) modelled by a causal diagram on the left. 
It follows that the variables $B$ and $T$ are independent, since the only path joining the variables is blocked by a \textit{collider} $C$ (\textbf{\textsf{Rule~\ref{1}}}). 
However, on the right, \textit{conditioning} on the collider $C$, depicted by a red box, unblocks the path between $B$ and $T$ making the variables likely dependent  (\textbf{\textsf{Rule~\ref{3}}}).}
\end{figure}

Interestingly, the pattern of independencies between the variables can be deduced from the structure of the causal diagram itself. It has been shown to boil down to the so called {\textbf{d-separation}} criterion, see~\cite{Pe09,SpGlSc00,PeMa18,PeGlJe16}. In a nutshell, the idea consists of inspecting all paths in the causal diagram connecting two variables of interest:
\begin{center}\textit{\parbox{0.91\columnwidth}{\textsf{if all those paths are \textit{blocked} then the variables are necessarily independent (otherwise the variables are likely dependent).}}}
\end{center}
The concept of \textit{blocking} a path is defined by the following three simple \textit{$d$-separation} rules (see Fig.~\ref{Fig_d-separation} for illustration):

\begin{Rule}\label{1}
\normalfont{A path is blocked if there is a \textit{collider} along the way, that is  a node with pair of arrows on the path that collide head-to-head.}\vspace{-0.1cm}
\end{Rule}

\begin{Rule}\label{2}
\normalfont{Conditioning on a \textit{non-collider} blocks the path (where non-collider is a node along the way with pair of arrows meeting head-to-tail or tail-to-tail).}
\end{Rule}\vspace{-0.1cm}

\begin{Rule}\label{3}
\normalfont{Conditioning on a \textit{collider} (or its descendant) removes the block from \textbf{\textsf{Rule~\ref{1}}}.}
\end{Rule}

In the following analysis, those rules will provide systematic insight into the pattern of conditional independences arising from specific post-selection procedures.\vspace{0.2cm}

\begin{figure}
\centering
\includegraphics[width=0.97\columnwidth]{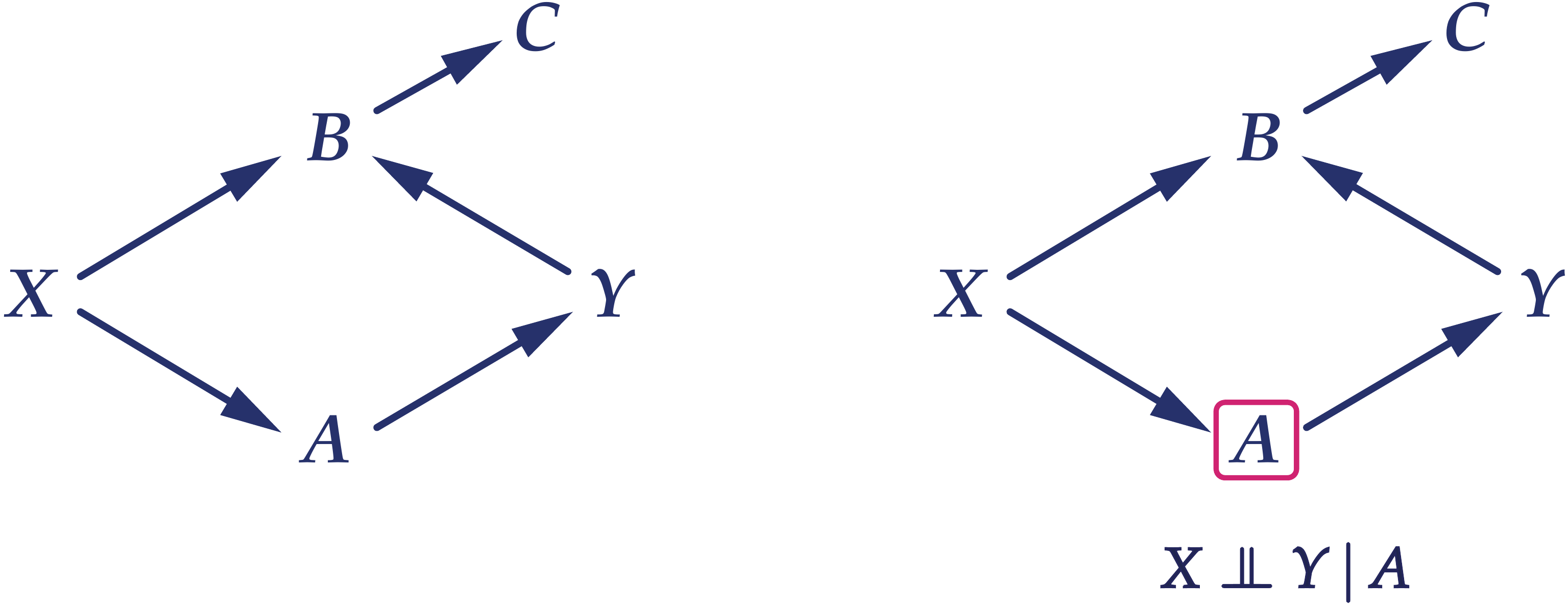}
\caption{\label{Fig_d-separation}{\bf\textsf{\mbox{Illustration of the \textit{d-separation} rules.}}} In the causal diagram nodes $X$ and $Y$ are connected via two paths $X\rightarrow B\leftarrow Y$ and $X\rightarrow A\rightarrow Y$. The path $X\rightarrow B\leftarrow Y$ is blocked since $B$ is a collider by \textbf{\textsf{Rule~\ref{1}}} and conditioning on $B$ or $C$ unblocks the path by \textbf{\textsf{Rule~\ref{3}}}. Whereas path $X\rightarrow A\rightarrow Y$ is unblocked since $A$ is a non-collider and conditioning on $A$ blocks the path by \textbf{\textsf{Rule~\ref{2}}}. If both paths are blocked then $X$ and $Y$ are said to be \textit{d-separated} and then the variables are independent. In this example it happens only in the case of conditioning on $A$ which is depicted on the right by a red box.}

\end{figure}

\section{Bell nonlocality and three causal assumptions}

In the following, we consider the usual Bell-type scenario with several parties $A,B,C,...$ conducting experiments in space-time separated regions. The whole experiment consists of a series of trials in which each party chooses a setting $x,y,z,...$ and makes a measurement registering the outcome $a,b,c,...$\,. For further convenience, let us denote the set of possible outcomes by $\mathcal{O}$, i.e., we have respectively $\mathcal{O}_{\scriptscriptstyle A}$, $\mathcal{O}_{\scriptscriptstyle B}$, $\mathcal{O}_{\scriptscriptstyle C}$, ...\,. Then, after many repetitions, the parties compare their results calculating the statistics given by the set of distributions $P_{\scriptscriptstyle abc...|xyz...}$ which describe the probability of observing outcomes $a,b,c,...$ given measurements $x,y,z,...$ were made. For conciseness, following the terminology in Ref.~\cite{Sc19}, we call such obtained set of probabilities $\mathcal{P}\equiv\{P_{\scriptscriptstyle abc...|xyz...}\}_{\scriptscriptstyle xyz...}$ a \textit{"behaviour"}. Note that  all probabilities in the behaviour $\mathcal{P}$ are supposed to be calculated \textit{without} rejecting any trial from the experiment (no post-selection is made).

\begin{figure*}[t]
\centering
\includegraphics[width=1.9\columnwidth]{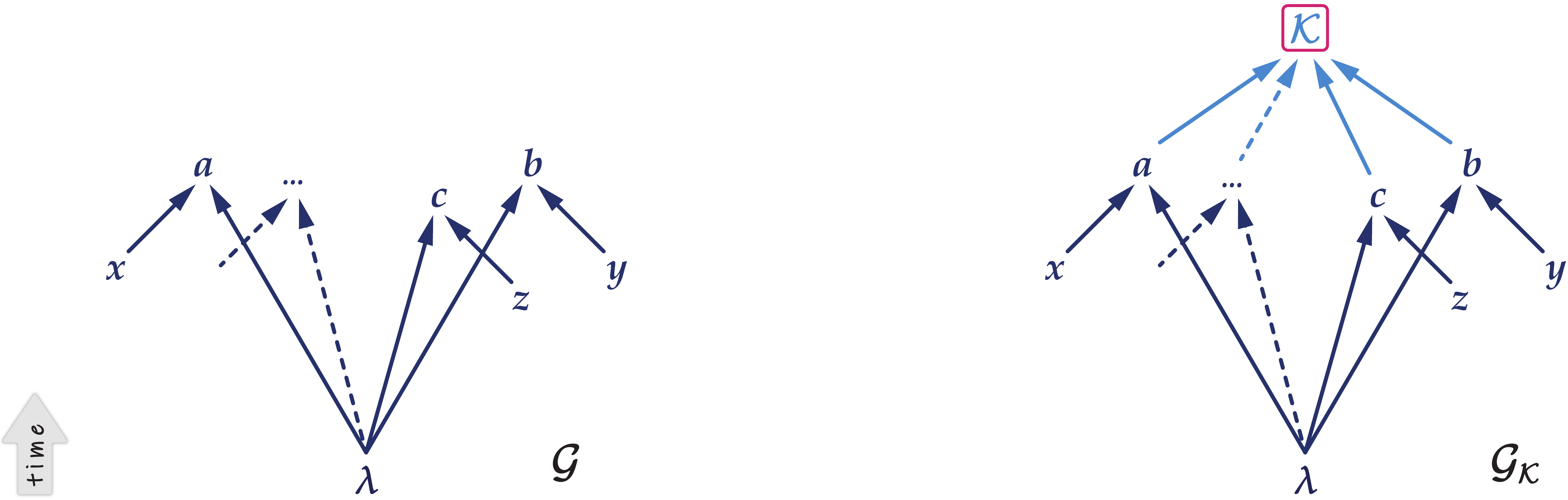}
\caption{\label{Fig_CausalStructure}{\bf\textsf{\mbox{Causal structure in a Bell  experiment.}}} One the left, the graph $\mathcal{G}$ describes causal relations between variables  $a,b,c,...$ (measurement outcomes), $x,y,z,...$ (choice of settings), and some hidden variable $\lambda$. It generates the behaviour $\mathcal{P}\equiv\{P_{\scriptscriptstyle abc...|xyz...}\}_{\scriptscriptstyle xyz...}$. 
On the right, the graph $\mathcal{G}_{\scriptscriptstyle \mathcal{K}}$ incorporates post-selection into the experiment by conditioning (red box) on the additional outcome-dependent variable $\mathcal{K}(a,b,c,...)$. The latter introduces potential \textit{selection bias} into such obtained behaviour $\mathcal{P}_{\scriptscriptstyle \mathcal{K}}\equiv\{P_{\scriptscriptstyle abc...|xyz...\,\mathcal{K}}\}_{\scriptscriptstyle xyz...}$.}
\end{figure*}

Bell inequalities are algebraic constraints in the form
\begin{eqnarray}\label{Bell-inequality}
\mathcal{I}(\mathcal{P})&\equiv&\sum_{\scriptscriptstyle abc...\atop xyz...}\,s_{\scriptscriptstyle xyz...}^{\scriptscriptstyle abc...}\,P_{\scriptscriptstyle abc...|xyz...}\ \leqslant\ I_{\scriptscriptstyle L}\,,
\end{eqnarray}
where $s_{\scriptscriptstyle xyz...}^{\scriptscriptstyle abc...}$ and $I_{\scriptscriptstyle L}$ are some numbers. These inequalities are derived under three assumptions called realism, locality and free choice. The \textit{realism assumption} posits that the observed correlations can be explained by the causal influence between the variables relevant for the experiment, that is measurement outcomes $a,b,c,...$ and settings $x,y,z,...$, as well as some unobserved (hidden) variables collectively denoted by $\lambda$. Thus, by conditioning on \textit{a priori} unknown $\lambda$, we can always write~\cite{Be93,Wi14a,BrCaPiScWe14,Sc19}
\begin{eqnarray}\label{realism}
P_{\scriptscriptstyle abc...|xyz...}&=&\sum_{\scriptscriptstyle \lambda}\ P_{\scriptscriptstyle abc...|xyz...\,\lambda}\cdot P_{\scriptscriptstyle \lambda|xyz...}\,.
\end{eqnarray}
Then, by invoking spatio-temporal structure of the experiment certain conditional independencies between the variables can be justified. First, the variables in different space locations cannot affect each other and the causal influences propagate respecting temporal order of events. Second, the hidden variable $\lambda$ is identified to be in the common past of variables representing the outcomes $a,b,c,...$\,, but not the variables representing choice of the settings $x,y,z,...$ and hence the latter cannot be affected by $\lambda$. The ensuing causal structure of the variables modelling the experiment is depicted in the causal graph $\mathcal{G}$ in Fig.~\ref{Fig_CausalStructure} (on the left). This readily translates into conditional independencies in the statistics generated by those causal models.\footnote{Note that in this argument causal relationships are considered as prior to the statistical relations, with the latter derivable from an appropriate structural causal model compatible with a given causal structure. This is the leitmotif of the causal inference field~\cite{Pe09,SpGlSc00,PeGlJe16,PeMa18}.} They are referred to as the \textit{locality} assumption
\begin{eqnarray}\label{locality}
P_{\scriptscriptstyle abc...|xyz...\,\lambda}&=&P_{\scriptscriptstyle a|x\lambda}\cdot P_{\scriptscriptstyle b|y\lambda}\cdot P_{\scriptscriptstyle c|z\lambda}\cdot...\,,
\end{eqnarray}
and the \textit{free choice} assumption (also called the \textit{measurement independence} assumption)
\begin{eqnarray}\label{free-choice}
P_{\scriptscriptstyle \lambda|xyz...}&=&P_{\scriptscriptstyle \lambda}\,.
\end{eqnarray}
Within the causal model framework, these relations are a straightforward application of the $d$-separation rules to the diagram in Fig.~\ref{Fig_CausalStructure} (on the left). [Eq.~(\ref{locality}) follows by iterative use of \textbf{\textsf{Rule~\ref{2}}} given conditioning on non-collider node $\lambda$, and Eq.~(\ref{free-choice}) is an application of \textbf{\textsf{Rule~\ref{1}}} to colliders $a,b,c,...$\,; cf. proof of \textbf{\textsf{Theorem}~\ref{theorem}}.]

To summarise, each Bell inequality Eq.~(\ref{Bell-inequality}) is a simple algebraic consequence of the three assumptions in Eqs.~(\ref{realism})-(\ref{free-choice}). It means that the violation of some Bell inequality entails the impossibility of explaining the observed behaviour $\mathcal{P}$ in a causal model maintaining both locality and free choice at the same time. The essence of Bell's theorem is to point out situations in which  quantum theory predicts violation of those inequalities~\cite{Be93,Wi14a,BrCaPiScWe14,Sc19}.

\section{Post-selection issues}

Crucially, the statistics used for estimation of probabilities in the behaviour $\mathcal{P}$ should include every experimental trial for a valid conclusion from breaking Bell inequality in Eq.~(\ref{Bell-inequality}) to be drawn. In practice, however, some sort of post-selection is always made. Let us formalise this concept by assuming that the causal structure encoded in the causal graph $\mathcal{G}$ remains the same, but admits a richer variety of outcomes $\widetilde{\mathcal{O}}$, i.e., we have $\widetilde{\mathcal{O}}_{\scriptscriptstyle A}\supset\mathcal{O}_{\scriptscriptstyle A}$, $\widetilde{\mathcal{O}}_{\scriptscriptstyle B}\supset\mathcal{O}_{\scriptscriptstyle B}$, $\widetilde{\mathcal{O}}_{\scriptscriptstyle C}\supset{\mathcal{O}}_{\scriptscriptstyle C}$, ...\,. It means that, in addition to the outcomes of interest $\mathcal{O}$, the experiment predicts results which will have to be rejected in the analysis. Then post-selection boils down to conditioning on some outcome-dependent variable
\begin{eqnarray}
\mathcal{K}&\equiv&\mathcal{K}(a,b,c,...)\,.
\end{eqnarray}
Say, for $\mathcal{K}=1$ we accept the result, otherwise for $\mathcal{K}=0$ the result is rejected. This procedure aims at recovering the proper structure of outcomes  for the  intended Bell inequality Eq.~(\ref{Bell-inequality}). In other words, the reduction $\widetilde{\mathcal{O}}\leadsto\mathcal{O}$ is achieved by making sure that the unwanted results drop out under the conditioning, i.e., $P_{\scriptscriptstyle abc...|xyz...\,\lambda\mathcal{K}}=0$ if $a\notin\mathcal{O}_{\scriptscriptstyle A}$ or $b\notin\mathcal{O}_{\scriptscriptstyle B}$ or $c\notin\mathcal{O}_{\scriptscriptstyle C}$, ...\,. Note that the value of $\mathcal{K}$ is decided only after the parties meet to compare their results. Hence the causal graph takes the form $\mathcal{G}_{\scriptscriptstyle\mathcal{K}}$ in Fig.~\ref{Fig_CausalStructure} (on the right).

In this way we get a new behaviour $\mathcal{P}_{\scriptscriptstyle \mathcal{K}}\equiv\{P_{\scriptscriptstyle abc...|xyz...\,\mathcal{K}}\}_{\scriptscriptstyle xyz...}$ which looks like a good candidate for a test of Bell inequalities Eq.~(\ref{Bell-inequality}). Indeed, all premises seem to be satisfied, i.e., the 'right' causal graph $\mathcal{G}$ with the appropriate pattern of outcomes $\mathcal{O}$, except one detail: there is conditioning in such obtained statistics. 
This raises worries as regards the validity of the conclusions, reached by using $\mathcal{P}_{\scriptscriptstyle \mathcal{K}}$ in Eq.~(\ref{Bell-inequality}), as a legitimate proof of Bell nonlocality. The selection bias may serve here as a warning of how easily post-selection can lead to false causal conclusions, cf. Fig.~\ref{Fig_SelectionBias}. In the case of a Bell test it might happen that conditioning (post-selection) bootstraps the correlations so that $\mathcal{I}(\mathcal{P}_{\scriptscriptstyle \mathcal K})> I_{\scriptscriptstyle L}$, while for the full statistics it remains $\mathcal{I}(\mathcal{P})\leqslant I_{\scriptscriptstyle L}$ in agreement with the causal graph $\mathcal{G}$ in Fig.~\ref{Fig_CausalStructure} (i.e., with locality and free choice maintained). This is possible because conditioning ruins the independence structure of Eqs.~(\ref{locality})~and~(\ref{free-choice}) which is required to prove Eq.~(\ref{Bell-inequality}). To see this, note that the outcomes $a,b,c,...$ play the role of  colliders in the causal graph $\mathcal{G}$ but conditioning on their descendent $\mathcal{K}$ in graph $\mathcal{G}_{\scriptscriptstyle\mathcal{K}}$ in Fig.~\ref{Fig_CausalStructure} opens paths that were previously blocked, cf. \textbf{\textsf{Rule~\ref{1}}}~and~\textbf{\textsf{Rule~\ref{3}}}, thereby introducing correlations into the data which may fake Bell inequalities. How critical it might be for the analysis of Bell nonlocality may attest the effort to close the detection loophole~\cite{BrCaPiScWe14,Sc19,Gi14c,La14a} (which is a case of post-selection too).

Having warned against jumping to hasty conclusions with post-selected data, it is then natural to ask: 
\begin{center}
\textit{\parbox{0.9\columnwidth}{\textsf{When is it possible to make a conclusive Bell argument in the post-selected regime?}}}
\end{center}
In order to make it precise, we assume that the considered causal structure for the experiment is given by the diagram $\mathcal{G}{\scriptscriptstyle \mathcal{K}}$ in Fig.~\ref{Fig_CausalStructure} and make the following definition:
\begin{definition}[Safe post-selection]\label{Safe-post-selection}\ \\Post-selection procedure specified by the variable $\mathcal{K}(a,b,c,...)$ is considered to be \myuline{safe} if the locality and free choice assumptions still hold in the post-selected regime, i.e.,
\begin{eqnarray}\label{locality-K}
P_{\scriptscriptstyle abc...|xyz...\,\lambda\mathcal{K}}&=&P_{\scriptscriptstyle a|x\lambda\mathcal{K}}\cdot P_{\scriptscriptstyle b|y\lambda\mathcal{K}}\cdot P_{\scriptscriptstyle c|z\lambda\mathcal{K}}\cdot...\,,
\end{eqnarray}
and
\begin{eqnarray}\label{free-choice-K}
P_{\scriptscriptstyle \lambda|xyz...\,\mathcal{K}}&=&P_{\scriptscriptstyle \lambda|\mathcal{K}}\,.
\end{eqnarray}
\end{definition}
 An immediate consequence is the observation:
\begin{corollary}\label{Corollary-Safe-post-selection}
If post-selection $\mathcal{K}$ is safe, then in the post-selected regime the same set of Bell inequalities holds, i.e.,  \begin{eqnarray}
\mathcal{I}(\mathcal{P})\ \leqslant\ I_{\scriptscriptstyle L}&\ \Rightarrow\ &\mathcal{I}(\mathcal{P}_{\scriptscriptstyle \mathcal{K}})\ \leqslant\ I_{\scriptscriptstyle L}\,.
\end{eqnarray}
\end{corollary}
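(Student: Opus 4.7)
The plan is to observe that the derivation of any Bell inequality of the form (\ref{Bell-inequality}) uses only three inputs: a hidden-variable decomposition in the spirit of (\ref{realism}), the factorisation (\ref{locality}), and the setting-independence (\ref{free-choice}). If all three of these can be reproduced for the post-selected behaviour $\mathcal{P}_{\scriptscriptstyle\mathcal{K}}$, then the very same algebraic chain of inequalities that bounds $\mathcal{I}(\mathcal{P})$ by $I_{\scriptscriptstyle L}$ will bound $\mathcal{I}(\mathcal{P}_{\scriptscriptstyle\mathcal{K}})$ by $I_{\scriptscriptstyle L}$. So the work is just to assemble a realism-type decomposition for $P_{\scriptscriptstyle abc...|xyz...\,\mathcal{K}}$ in which the summands already satisfy locality and free choice.

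First, I would write the purely probabilistic identity
\begin{equation*}
P_{\scriptscriptstyle abc...|xyz...\,\mathcal{K}}\ =\ \sum_{\scriptscriptstyle \lambda}\ P_{\scriptscriptstyle abc...|xyz...\,\lambda\mathcal{K}}\,\cdot\,P_{\scriptscriptstyle \lambda|xyz...\,\mathcal{K}}\,,
\end{equation*}
which is just the law of total probability applied with the extra conditioning variable $\mathcal{K}$ present throughout, requiring no causal input at all. This mirrors the realism step (\ref{realism}) but is now stated in the post-selected regime.

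Next, I would apply the two assumptions that come free with $\mathcal{K}$ being safe in the sense of \textbf{\textsf{Definition~\ref{Safe-post-selection}}}: plugging (\ref{locality-K}) into the first factor and (\ref{free-choice-K}) into the second factor yields
\begin{equation*}
P_{\scriptscriptstyle abc...|xyz...\,\mathcal{K}}\ =\ \sum_{\scriptscriptstyle \lambda}\ P_{\scriptscriptstyle a|x\lambda\mathcal{K}}\cdot P_{\scriptscriptstyle b|y\lambda\mathcal{K}}\cdot P_{\scriptscriptstyle c|z\lambda\mathcal{K}}\cdots\ P_{\scriptscriptstyle \lambda|\mathcal{K}}\,.
\end{equation*}
This has exactly the same structural form as the unconditioned local hidden-variable model: a convex combination (weighted by a setting-independent distribution over $\lambda$) of fully factorised single-party response functions. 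The only cosmetic difference is that $\mathcal{K}$ is carried silently as an extra parameter in each factor, which does not affect any of the bookkeeping.

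Finally, I would substitute this decomposition into $\mathcal{I}(\mathcal{P}_{\scriptscriptstyle\mathcal{K}})$ and repeat, verbatim, whatever combinatorial/optimisation argument produced the bound $I_{\scriptscriptstyle L}$ in the original proof of (\ref{Bell-inequality}): the argument depends only on convexity over $\lambda$ and on the product structure in $a,b,c,\ldots$ with setting-independent weights, all of which are now supplied by the display above. The main subtlety—and really the only thing to check carefully—is that the derivation of Bell inequalities is genuinely a black-box consequence of (\ref{realism})--(\ref{free-choice}) and does not tacitly use any further property of the hidden variable or of the conditioning context; once this is granted, the corollary is immediate.
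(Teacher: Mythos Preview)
Your proposal is correct and follows essentially the same approach as the paper's own justification, which is a brief paragraph immediately after the corollary noting that Bell inequalities are purely algebraic consequences of Eqs.~(\ref{realism})--(\ref{free-choice}) and hence the same algebra with Eqs.~(\ref{locality-K}) and (\ref{free-choice-K}) yields the same bound for $\mathcal{P}_{\scriptscriptstyle\mathcal{K}}$. You simply make this explicit by writing out the hidden-variable decomposition of $P_{\scriptscriptstyle abc...|xyz...\,\mathcal{K}}$ via the law of total probability and then inserting the two safe-post-selection conditions, which is exactly the content the paper leaves implicit.
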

\noindent It follows from the fact that each Bell inequality is  obtained by algebraic manipulation of the expression on the l.h.s. of Eq.~(\ref{Bell-inequality}) assuming Eqs.~(\ref{locality}) and (\ref{free-choice}) hold. Clearly, the same must be true for $\mathcal{P}_{\scriptscriptstyle \mathcal{K}}$ since the same algebra, now with Eqs.~(\ref{locality-K}) and (\ref{free-choice-K}), must give the same result.

In this way, the problem of validity of reasoning in the post-selected regime with the same Bell inequalities is phrased in terms of conditional independencies in a given causal structure. Given post-selection $\mathcal{K}$, the latter can be efficiently scrutinised with the diagrammatic tools of causal inference ($d$-separation rules) applied to the causal graph $\mathcal{G}_{\scriptscriptstyle\mathcal{K}}$.

\section{Main result}

\begin{figure*}[t]
\center
\includegraphics[width=1.6\columnwidth]{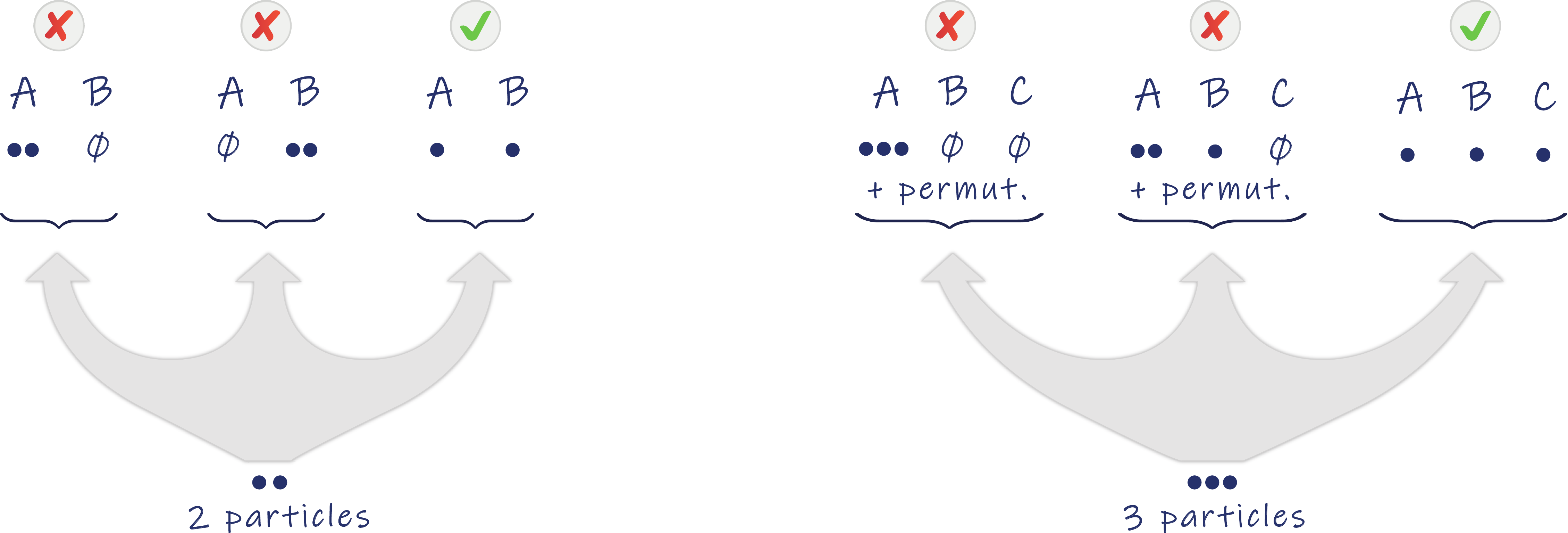}
\caption{\label{Fig_all-but-one}{\bf\textsf{\mbox{Illustration of the \textit{all-but-one} principle.}}} Suppose that in a repeated experiment two particles arrive to Alice (A) and Bob (B) in all possible configurations depicted on the left. Let the meaningful measurements are only those performed on a single particle by each party. If the number of particles is conserved, then Alice alone will know which trials to discard (and similarly for Bob). On the right, there is another party Charlie (C) and three particles are distributed in all possible ways. Again, conservation of particles allows to decide by any two of them $AB$, $AC$ or $BC$ which trials need to be rejected (without knowing what is happening respectively in $C$, $B$ or $A$).}
\end{figure*}

Suppose that the behaviour $\mathcal{P}$ produced by causal graph $\mathcal{G}$ features extra correlations due to the specifics of the preparation procedure (e.g. the number of particles in the experiment is conserved). This often serves as a means of simplification for post-selection $\mathcal{K}$. Let us consider the following property:


\begin{definition}[All-but-one]\label{all-but-one}\ \\Post-selection $\mathcal{K}$ conforms to the \myuline{all-but-one} principle, if it can be fully determined without knowing one of the outcomes. Formally, this boils down to a condition on the form of the variable $\mathcal{K}=\mathcal{K}(a_{\scriptscriptstyle 1},a_{\scriptscriptstyle 2},...,a_{\scriptscriptstyle N})$ which requires that it can be reduced to 
\begin{eqnarray}
\mathcal{K}\,=\,\mathcal{K}(a_{\scriptscriptstyle 1},...\,,\!\not\!{a}_{\scriptscriptstyle k},...\,,a_{\scriptscriptstyle N})\,,\label{all-but-one-EQ}
\end{eqnarray}
for each $k=1,...\,,N$. Here, the symbol $\not\!{a}_{\scriptscriptstyle k}$ means that the outcome for $k$-th party is missing.
\end{definition}
Informally, this means that in a Bell scenario all parties except one is always enough to know whether post-selection ends with a success or not ($\mathcal{K}=1\ \text{or}\ 0$). For example, for three parties $A$, $B$ and $C$, already two of them may decide if post-selection takes place or not, i.e., we have
\begin{eqnarray}\label{all-but-one-EQ-three}
\mathcal{K}(a,b,c)\,=\,\mathcal{K}(b,c)\,=\,\mathcal{K}(a,c)\,=\,\mathcal{K}(a,b)\,.\ \ \ \ 
\end{eqnarray}

\begin{example}
\normalfont{A typical situation where \textit{all-but-one} principle can be readily applied is when the number of particles is conserved. Suppose that $N$ particles are distributed among $N$ parties which receive the particles in different configurations. Let the interesting measurement results are only those when there is a single particle per party. This means that, on top of the valid experimental runs there will be trials in which some parties will register no or more than one particle. Hence the experiment must resort to  post-selection which consists of retaining only those trials when each party reports a single particle on their side. Observe that since the total number of particles $N$ is conserved, such a post-selection conforms to the \textit{all-but-one} principle. This is because gathering the outcomes from $N-1$ parties is enough to infer the number of particles received by the missing one (i.e., $N-1$ parties registering a single particle may conclude that the remaining one registers a single particle too, since the total number of particles is $N$) and hence to resolve post-selection only by themselves. See Fig.~\ref{Fig_all-but-one} for an illustration and Refs.~\cite{PaChLuWeZeZu12,KrHoLaZe17,WaScLaTh20,ZhBaLuZhYaRuPa08,KyErHoKrZe20,ErKrZe20,BlMa19,KiPrChYaHaLeKa18,KiChLiHa20,BlBoMaKi21,StLiMoTu17,BeLoCo17,CaBeCoLo19,ZhLiLiPeSuHuHe18,LoCo18,WaQiDiChChYoHe19,NoCaCoLo20,SuWaLiXuXuLiGu20,BaChPrLiChHuKi20,WaQiDiChChYoHe19} for some experimental designs with post-selection of the \textit{all-but-one} type.}
\end{example}


Let us observe that the \textit{all-but-one} principle has non-trivial consequences for the causal graph $\mathcal{G}_{\scriptscriptstyle \mathcal{K}}$ in Fig.~\ref{Fig_CausalStructure}. Namely, if the statistics generated by graph $\mathcal{G}$ gives a promise of \textbf{\textsf{Definition~\ref{all-but-one}}}, then one of the arrows pointing to $\mathcal{K}$ in graph $\mathcal{G}_{\scriptscriptstyle \mathcal{K}}$ can be always erased without in any way affecting the generated statistics (in particular, this means that $\mathcal{P}_{\scriptscriptstyle \mathcal{K}}$ will remain unchanged).

Now we can state our main result:

\begin{figure*}[t]
\centering
\includegraphics[width=2.07\columnwidth]{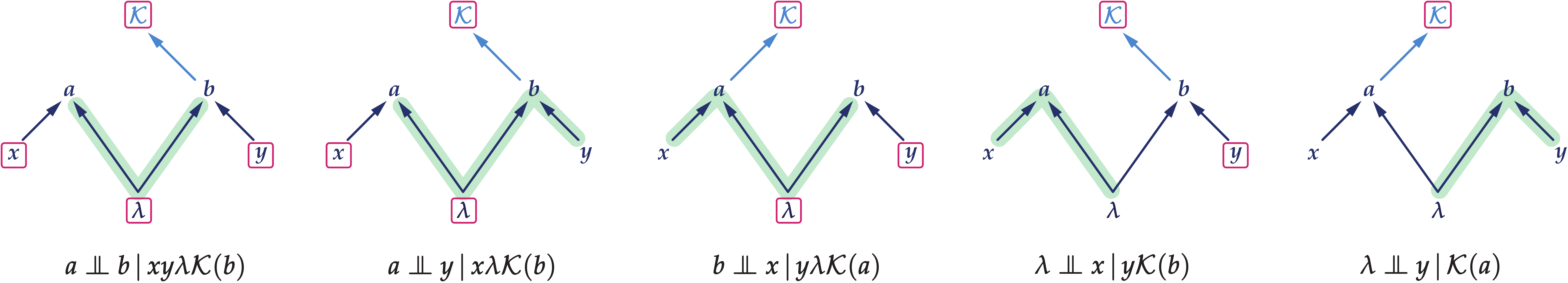}
\caption{\label{Fig_Proof}{\bf\textsf{\mbox{Graphical proof of \textbf{Theorem~\ref{theorem}} (two parties).}}} Each graph illustrates structure of conditioning in Eqs.~(\ref{locality-proof})\,-\,(\ref{free-choice-proof}) depicted by red boxes around the variables. Marked in green are paths joining variables of interest for which the respective independencies are inferred by the $d$-separation \textbf{\textsf{Rules~\ref{1}\,-\,\ref{3}}}. In the first three graphs (on the left) conditioning on the non-collider $\lambda$ blocks the paths, while for the last two graphs (on the right) the paths are blocked by the colliders $a$ and $b$ respectively. Note that without erasing one of the arrows coming to $\mathcal{K}$, as allowed by condition Eq.~(\ref{all-but-one-EQ-two}),  the inference of conditional independencies would not be possible.}
\end{figure*}

\begin{theorem}\label{theorem}\ \\For arbitrary number of parties, post-selection which conforms to the \myuline{all-but-one} principle is always \myuline{safe}.
\end{theorem}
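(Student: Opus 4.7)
The plan is to use the all-but-one property as a license to prune arrows in $\mathcal{G}_{\mathcal{K}}$. By Eq.~(\ref{all-but-one-EQ}), $\mathcal{K}$ is functionally independent of each individual $a_k$, so for every $k$ the arrow $a_k \to \mathcal{K}$ can be erased in $\mathcal{G}_{\mathcal{K}}$ without altering the joint distribution it generates; denote the resulting reduced graph by $\mathcal{G}_{\mathcal{K}}^{(k)}$. The crucial point is that every $\mathcal{G}_{\mathcal{K}}^{(k)}$ describes the same behaviour $\mathcal{P}_{\mathcal{K}}$, so any conditional independence obtained by $d$-separation in any one of them is a bona fide property of $\mathcal{P}_{\mathcal{K}}$, and independences derived from different reductions may be combined freely.

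For the free-choice condition Eq.~(\ref{free-choice-K}), I fix $k$ and work in $\mathcal{G}_{\mathcal{K}}^{(k)}$. The only edge incident to $x_k$ is $x_k \to a_k$, and in $\mathcal{G}_{\mathcal{K}}^{(k)}$ the node $a_k$ has no outgoing edges, so the unique path from $\lambda$ to $x_k$ is $\lambda \to a_k \leftarrow x_k$, with $a_k$ a collider. Since $a_k$ is now a sink it has no descendants at all, in particular none in the conditioning set $\{x_j\}_{j\neq k}\cup\{\mathcal{K}\}$, so \textbf{\textsf{Rule~\ref{1}}} blocks this path. Hence $\lambda\indep x_k\mid\{x_j\}_{j\neq k},\mathcal{K}$. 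Varying $k$, the conditional $P_{\lambda\mid x_1\ldots x_N\mathcal{K}}$ is independent of each $x_k$ separately, and therefore equals $P_{\lambda\mid\mathcal{K}}$.

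For the locality condition Eq.~(\ref{locality-K}), I again work in $\mathcal{G}_{\mathcal{K}}^{(k)}$ for each $k$. There $a_k$ has only two edges, both incoming, from $\lambda$ and from $x_k$, and no outgoing edge. Any path from $a_k$ to a target variable in $\{a_j : j\neq k\}\cup\{x_j : j\neq k\}$ therefore begins with $a_k\leftarrow x_k$ or $a_k\leftarrow\lambda$. In the first case the path cannot extend beyond $x_k$, since $x_k$ has no other edges and so cannot reach any target. In the second case $\lambda$ appears on the path as a tail-to-tail non-collider and lies in the conditioning set $\{x_k,\lambda,\mathcal{K}\}$, so \textbf{\textsf{Rule~\ref{2}}} blocks the path at $\lambda$. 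This yields $a_k\indep (a_j,x_j)_{j\neq k}\mid x_k,\lambda,\mathcal{K}$. Inserting the implied identity $P_{a_k\mid a_1\ldots a_{k-1},x_1\ldots x_N,\lambda,\mathcal{K}} = P_{a_k\mid x_k\lambda\mathcal{K}}$ into the ordered chain-rule expansion of $P_{a_1\ldots a_N\mid x_1\ldots x_N\lambda\mathcal{K}}$ collapses every factor and gives the product form of Eq.~(\ref{locality-K}).

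The main obstacle I anticipate is the bookkeeping: the all-but-one principle licenses only a single arrow deletion at a time, yet the argument needs $d$-separation statements for every party. The resolution is that all $N$ reduced graphs $\mathcal{G}_{\mathcal{K}}^{(k)}$ encode the \emph{same} joint distribution $\mathcal{P}_{\mathcal{K}}$, so the individual independences extracted one at a time can be assembled without conflict into the joint factorisations Eqs.~(\ref{free-choice-K}) and (\ref{locality-K}). Everything else in the proof is a routine check of paths, exactly along the lines sketched for two parties in Fig.~\ref{Fig_Proof}.
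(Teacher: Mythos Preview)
Your proof is correct and follows essentially the same approach as the paper: use the all-but-one principle to erase the arrow $a_k\to\mathcal{K}$ (your reduced graphs $\mathcal{G}_{\mathcal{K}}^{(k)}$ are exactly the graphs the paper uses, cf.\ Fig.~\ref{Fig_CausalStructure_N}), then read off the needed conditional independences by $d$-separation and combine them via the chain rule. The only cosmetic difference is granularity: the paper peels variables off one at a time (e.g.\ $\lambda\indep x_k\mid x_{k+1}\ldots x_N\mathcal{K}$ and $a_k\indep a_l\mid a_{l+1}\ldots a_N x_1\ldots x_N\lambda\mathcal{K}$), whereas you establish the full block independences $\lambda\indep x_k\mid\{x_j\}_{j\neq k},\mathcal{K}$ and $a_k\indep(a_j,x_j)_{j\neq k}\mid x_k,\lambda,\mathcal{K}$ in one shot, which is a little cleaner but logically equivalent.
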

\begin{proof}
Here we sketch the proof for two parties $A$ and $B$ which serves to illustrate the main ideas. For the full proof see \textsf{Appendix}. We need to justify that  Eqs.~(\ref{locality-K}) and (\ref{free-choice-K}) hold under the following condition:
\begin{eqnarray}
\label{all-but-one-EQ-two}
\mathcal{K}\,=\,\mathcal{K}(b)\,=\,\mathcal{K}(a)\,,
\end{eqnarray}
i.e., the \textit{all-but-one} principle in \textbf{\textsf{Definition~\ref{all-but-one}}}.

As for Eq.~(\ref{locality-K}) the reasoning follows the usual route starting with the standard chain rule, i.e., we have
\begin{eqnarray}\nonumber
P_{\scriptscriptstyle ab|xy\lambda\mathcal{K}}&=&P_{\scriptscriptstyle a|bxy\lambda\mathcal{K}}\cdot P_{\scriptscriptstyle b|xy\lambda\mathcal{K}}\,.
\end{eqnarray}
Then the proof boils down to justification of the following conditional independencies:
\begin{eqnarray}\nonumber
\begin{array}{llllll}
P_{\scriptscriptstyle a|bxy\lambda\mathcal{K}}&=&P_{\scriptscriptstyle a|xy\lambda\mathcal{K}}&&\text{since}&a\indep  b\,|\,xy\lambda\mathcal{K}(b)\\[3pt]
&=&P_{\scriptscriptstyle a|x\lambda\mathcal{K}}&&\text{since}&a\indep  y\,|\,x\lambda\mathcal{K}(b)\,,\\[3pt]
P_{\scriptscriptstyle b|xy\lambda\mathcal{K}}&=&P_{\scriptscriptstyle b|y\lambda\mathcal{K}}&&\text{since}&b\indep  x\,|\,y\lambda\mathcal{K}(a)\,.
\end{array}\\\label{locality-proof}
\end{eqnarray}
Each of them can be inferred from the causal graph $\mathcal{G}_{\scriptscriptstyle{K}}$ and application of the $d$-separation \textbf{\textsf{Rule~\ref{2}}} to the non-collider node $\lambda$. See  Fig.~\ref{Fig_Proof} for illustration.

Similarly, we can justify  Eq.~(\ref{free-choice-K}) and get
\begin{eqnarray}\nonumber
\begin{array}{llllll}
P_{\scriptscriptstyle \lambda|xy\mathcal{K}}&=&P_{\scriptscriptstyle \lambda|y\mathcal{K}}&&\text{since}&\lambda\indep  x\,|\,y\mathcal{K}(b)\\[3pt]
&=&P_{\scriptscriptstyle \lambda|\mathcal{K}}&& \text{since}&\lambda\indep  y\,|\,\mathcal{K}(a)\,.
\end{array}\\
\label{free-choice-proof}
\end{eqnarray}
This time it follows from the $d$-separation \textbf{\textsf{Rule~\ref{1}}}, since $a$ and $b$ are colliders respectively, and \textbf{\textsf{Rule~\ref{3}}} does not apply (note that in neither case $\mathcal{K}$ is a descendent). See  Fig.~\ref{Fig_Proof} for illustration.

Note that crucial for this line of reasoning is the flexibility of the expression $\mathcal{K}$ in Eq.~(\ref{all-but-one-EQ-two}), due to the \textit{all-but-one} principle, which permits to erase one of the arrows coming to $\mathcal{K}$. This trick prevents unblocking certain paths required for the inference of conditional independences. [For example, in last graph on the right in Fig.~\ref{Fig_Proof} retaining arrow $b\rightarrow\mathcal{K}$ would have opened the (green) path $\lambda\rightarrow b\leftarrow x$, since $\mathcal{K}$ becomes then a descendent of the collider $b$ and \textbf{\textsf{Rule~\ref{3}}} applies.]

In general, the complexity of paths that need to be considered in the causal graph $\mathcal{G}_{\scriptscriptstyle \mathcal{K}}$ grows with the number of parties and then the $d$-separation tools prove indispensable for this kind of analysis, see \textsf{Appendix}.
\end{proof}

\section{Discussion}

Because entanglement is not a property generated on demand, every Bell experiment must resort to post-selection. However, this opens the doors to the selection bias introducing non-causal correlations into the data, and thus threatening the conclusions expected to be drawn from the experiment. Therefore it is important that the analysis of Bell nonlocality takes this fact into account. In this paper we gave a simple criterion, called the \textit{all-but-one} principle, that allows for safe reasoning in the post-selected regime. Technically, we prove a theorem showing that Bell inequalities derived from the full causal graph which includes  conditioning due to post-selection of the \textit{all-but-one} type remain unchanged. It means that the conclusions drawn from breaking Bell inequalities with such a
post-selected data remain in full force. Beyond the foundational research and application in multipartite entanglement generation schemes~\cite{PaChLuWeZeZu12,ErKrZe20,WaScLaTh20}, this criterion should be significant for quantum cryptography and device independent certification~\cite{Sc19,EkRe14}.

Novelty of the result reported in this work is three-fold: {(\textit{a})} it concerns \textit{any} multipartite scenario with an arbitrary number of outcomes and settings, {(\textit{b})} it pertains to \textit{any} Bell inequality that can be derived in a given scenario, and {(\textit{c})} \textit{both} assumptions of locality and free choice are explicitly discussed in our analysis. The generality of the \textit{all-but-one} principle should be compared with other treatments of post-selection problem due to entanglement generation~\cite{PoHaZu97,Zu00,ScVaCaMa11}.

Let us emphasise that the \textit{all-but-one} principle draws on a special kind of correlations built into the data due to the specifics of the preparation procedure. Typically, if events of interest consist of arrival of a single particle in each detection channel and the number of particles is known and conserved, then such a post-selection fulfils the \textit{all-but-one} principle. We remark that it is a common situation in quantum optical schemes for entanglement generation, see Refs.~\cite{PaChLuWeZeZu12,ErKrZe20,WaScLaTh20}. Some recent proposals based on coincidence counts for high-dimensional multi-particle entanglement that fall within the \textit{all-but-one} principle include entanglement by path identity~\cite{KrHoLaZe17,KyErHoKrZe20}, entanglement without touching~\cite{BlMa19,KiPrChYaHaLeKa18,KiChLiHa20,BlBoMaKi21} or spatial overlap of indistinguishable particles~\cite{BeLoCo17,CaBeCoLo19}. See also Refs.~\cite{ZhLiLiPeSuHuHe18,WaQiDiChChYoHe19,LoCo18,NoCaCoLo20,SuWaLiXuXuLiGu20,BaChPrLiChHuKi20}. For completeness, we note that the principle  is not applicable to time-bin entanglement scheme~\cite{Fr89} which requires specific treatment~\cite{AeKwLaZu99,LiVaChCaMa10,CaCaSaCuFuToFi15,VeAgToAvLaVaVi18}.

Note also that the \textit{all-but-one} principle has limitations. Although applicable in many theoretical settings it does not hold in situations with detector inefficiencies (when the number of particles is not predictable). This is a serious matter of concern for experimental tests of Bell inequalities leading to the so called detection loophole which has to be analysed by other means~\cite{BrCaPiScWe14,Sc19,Gi14c,La14a}. We also note that an important experimental technique based on event-ready-detection~\cite{ZuZeHoEk93} is beyond the scope of the principle in the present form (however, it allows for a straightforward extension to include that scenario too). 

We remark that in the paper we take a conservative approach to the analysis of Bell nonlocality with causes propagating forward-in-time. For a discussion of retrocausality see e.g.~\cite{Pr96,WhAr20}. Note also that we consider a situation in which both assumptions of locality and free choice are maintained at the same time. For a discussion of partial relaxation of those assumptions see~\cite{BlPoYeGaBo21} and references therein.

Finally, let us highlight the role of conceptual tools of causal inference~\cite{Pe09,SpGlSc00,PeGlJe16,PeMa18} in the present analysis. Not only this is an inspiring and rigorous framework for the discussion of correlations vs cause-and-effect relations, but comes equipped with the high-level diagrammatic tools ($d$-separation rules) which prove indispensable for the treatment of multipartite Bell scenarios with many observers and outcomes. Despite a fairly recent development of the field of causal inference outside of physics, those methods have already successfully influenced the research in quantum foundations, see e.g.~\cite{WoSp15,ChKuBrGr15,RiAgVeJaSpRe15,RiGiChCoWhFe16,AlBaHoLeSp17,ChLePi18,ChCaAgDiAoGiSc18,Ca18,BlPoYeGaBo21,BlBo21}.

\section*{Acknowledgments}
We thank M. \.Zukowski for bringing post-selection issues to our attention. We appreciate Y.-S. Kim and R. Lo Franco for helpful comments. We acknowledge partial support by the Foundation for Polish Science (IRAP project, ICTQT, contract no. MAB/2018/5, co-financed by EU within Smart Growth Operational Programme).

\bibliographystyle{apsrev4-1custom}
\bibliography{CombQuant}

\onecolumn\newpage
\appendix

\section*{\textbf{Appendix}}

Here we prove \textbf{\textsf{Theorem~\ref{theorem}}} from the main text. For sake of illustration, we start with the  case of three parties. Then we give the full proof for any number of parties.

\section*{$\bullet$ \,Proof of \textbf{\textsf{Theorem~\ref{theorem}}} for \textit{three} parties A, B and C}

\textit{[The following proof for three parties aims to better illustrate some additional aspects which do not arise in the two-party case. It also serves to emphasise the significance of $d$-separation tools of causal inference~\cite{Pe09,SpGlSc00,PeGlJe16,PeMa18} for this kind of analysis.]}

\begin{proof}Let us consider a Bell experiment with three parties described by the causal graph $\mathcal{G}_{\mathcal{K}}$ in Fig.~\ref{Fig_CausalStructure_ABC} and assume that post-selection conforms to the \textit{all-but-one} principle in \textbf{\textsf{Definition~\ref{all-but-one}}}, i.e., we have 
\begin{eqnarray}\label{all-but-one-EQ-three-SI}
\mathcal{K}(a,b,c)\,=\,\mathcal{K}(b,c)\,=\,\mathcal{K}(a,c)\,=\,\mathcal{K}(a,b)\,.
\end{eqnarray}
Crucially, this property allows erasing one of the three arrows coming to $\mathcal{K}$ without  affecting the generated statistics. This trick will be used to infer conditional independencies in the post-selected behaviour $\mathcal{P}_{\scriptscriptstyle \mathcal{K}}$.

\begin{figure}[h]
\centering
\includegraphics[width=1\columnwidth]{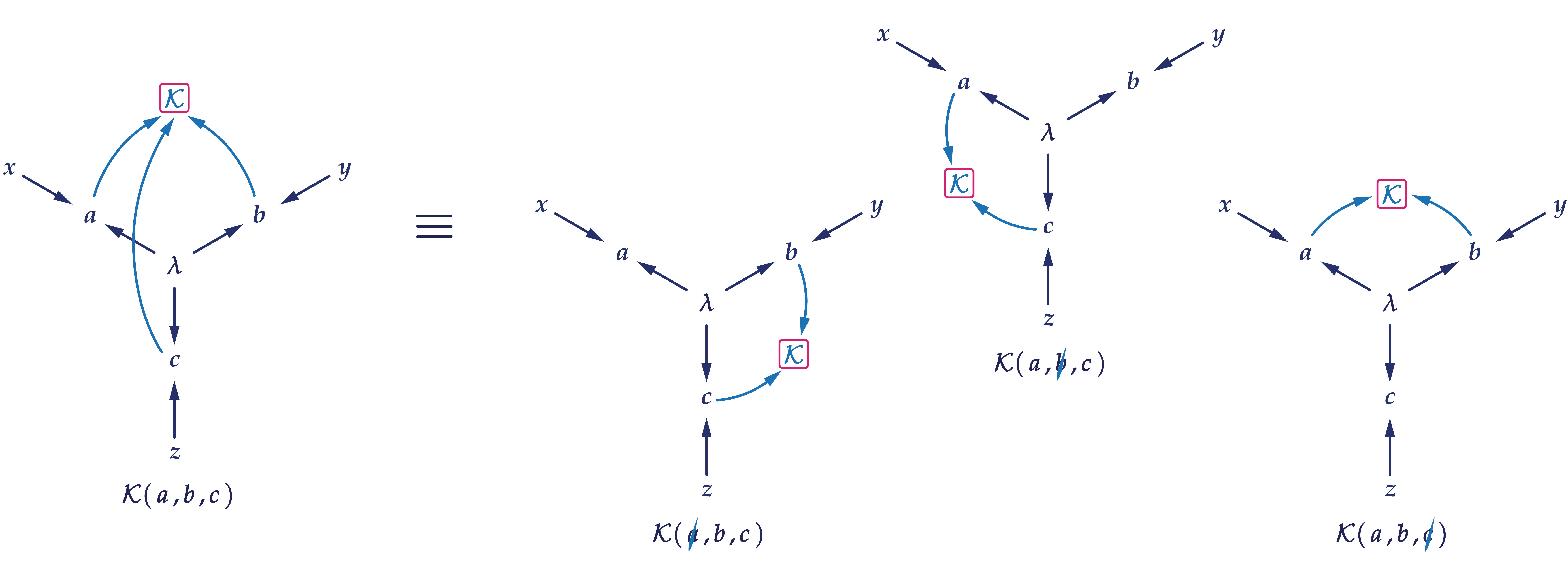}
\caption{\label{Fig_CausalStructure_ABC}{\bf\textsf{\mbox{Causal structure in a Bell experiment for \textit{three} parties with post-selection.}}} On the left, unfolded on a plain is the causal graph $\mathcal{G}_{\scriptscriptstyle \mathcal{K}}$ describing causal relations between variables in a Bell experiment ($a,b,c$ - outcomes, $x,y,z$ - measurement settings, and $\lambda$ - hidden variable). Variable $\mathcal{K}$ represents post-selection (where the red box means conditioning).  Cf. Fig.~\ref{Fig_CausalStructure}.
On the right, three causal graphs which are equivalent to graph $\mathcal{G}_{\scriptscriptstyle \mathcal{K}}$, if the \textit{all-but-one} principle in Eq.~(\ref{all-but-one-EQ-three-SI}) holds (then one of the three arrows coming to $\mathcal{K}$ can be always erased without affecting the generated statistics).}
\end{figure}

In order to prove \textbf{\textsf{Theorem~\ref{theorem}}} we need to justify that both Eqs.~(\ref{locality-K}) and (\ref{free-choice-K}) in \textbf{\textsf{Definition~\ref{Safe-post-selection}}} hold under the condition in Eq.~(\ref{all-but-one-EQ-three-SI}). 

As a straightforward application of the chain rule we get
\begin{eqnarray}
P_{ abc|xyz\lambda\mathcal{K}}\ =\ P_{ a|bcxyz\lambda\mathcal{K}}\,\cdot\, P_{ bc|xyz\lambda\mathcal{K}}\ =\ P_{ a|bcxyz\lambda\mathcal{K}}\,\cdot\, P_{ b|cxyz\lambda\mathcal{K}}\,\cdot\, P_{ c|xyz\lambda\mathcal{K}}\,.
\end{eqnarray}
Now, the prove of Eq.~(\ref{locality-K}) boils down to showing the following sequence of conditional independencies:
\begin{eqnarray}\label{locality-proof-three-1}
\begin{array}{llllll}
P_{ a|bcxyz\lambda\mathcal{K}}&=&P_{ a|cxyz\lambda\mathcal{K}}&&\text{since}&a\indep  b\,|\,cxyz\lambda\mathcal{K}(b,c)\\[3pt]
&=&P_{ a|xyz\lambda\mathcal{K}}&\qquad&\text{since}&a\indep  c\,|\,xyz\lambda\mathcal{K}(b,c)\\[3pt]
&=&P_{ a|xz\lambda\mathcal{K}}&&\text{since}&a\indep  y\,|\,xz\lambda\mathcal{K}(b,c)\\[3pt]
&=&P_{ a|x\lambda\mathcal{K}}&&\text{since}&a\indep z\,|\,x\lambda\mathcal{K}(b,c)\,.
\end{array}\ \ 
\end{eqnarray}
\begin{eqnarray}\label{locality-proof-three-2}
\begin{array}{llllll}
P_{ b|cxyz\lambda\mathcal{K}}&=&P_{ b|xyz\lambda\mathcal{K}}&\qquad&\text{since}&b\indep  c\,|\,xyz\lambda\mathcal{K}(a,c)\\[3pt]
&=&P_{ b|yz\lambda\mathcal{K}}&&\text{since}&b\indep  x\,|\,yz\lambda\mathcal{K}(a,c)\\[3pt]
&=&P_{ b|y\lambda\mathcal{K}}&&\text{since}&b\indep z\,|\,y\lambda\mathcal{K}(a,c)\,.
\end{array}\ \ \ \ 
\end{eqnarray}
\begin{eqnarray}\label{locality-proof-three-3}
\begin{array}{llllll}
P_{ c|xyz\lambda\mathcal{K}}&=&P_{ c|yz\lambda\mathcal{K}}&\qquad&\text{since}&c\indep  x\,|\,yz\lambda\mathcal{K}(a,b)\\[3pt]
&=&P_{ c|z\lambda\mathcal{K}}&&\text{since}&c\indep  y\,|\,z\lambda\mathcal{K}(a,b)\,.
\end{array}\ \ \ \ \ \ \ 
\end{eqnarray}
All of them can be inferred by inspecting paths joining variables in question in the causal graph $\mathcal{G}_{\scriptscriptstyle \mathcal{K}}$ and the use of $d$-separation rules. See Fig.~\ref{Fig_Proof_ABC}. In each case there are two possible paths which are blocked by conditioning on the non-collider node $\lambda$ (\textbf{\textsf{Rule~\ref{2}}}). Note that in order to get the required conditional independencies all conditions in Eq.~(\ref{all-but-one-EQ-three-SI}) need to be used, i.e., the lack of the respective arrow coming to $\mathcal{K}$ in each of the graphs in Fig.~\ref{Fig_Proof_ABC} is essential.

In a similar manner we can prove Eq.~(\ref{free-choice-K}), that is we can prove the following conditional independencies:
\begin{eqnarray}\label{free-choice-proof-three}
\begin{array}{llllll}
P_{ \lambda|xyz\lambda\mathcal{K}}&=&P_{ \lambda|yz\mathcal{K}}&\qquad&\text{since}& \lambda\indep  x\,|\,yz\mathcal{K}(b,c)\\[3pt]
&=&P_{ \lambda|z\mathcal{K}}&&\text{since}& \lambda\indep  y\,|\,z\mathcal{K}(a,c)\\[3pt]
&=&P_{ \lambda|\mathcal{K}}&&\text{since}& \lambda\indep z\,|\,\mathcal{K}(a,b)\,.
\end{array}\ \ \ \ 
\end{eqnarray}
Now there is always one path joining the relevant variables, and in each case it is blocked by the respective collider $a$, $b$ and $c$ (\textbf{\textsf{Rule~\ref{1}}}). See Fig.~\ref{Fig_Proof_ABC}. Notice that, here as well, having all three arrows coming to $\mathcal{K}$ would spoil the proof, since it would lift the block from the respective collider by conditioning on its descendent $\mathcal{K}$ (\textbf{\textsf{Rule~\ref{3}}}).\\
\end{proof}

\newpage

\begin{figure}[H]
\centering
\includegraphics[width=1\columnwidth]{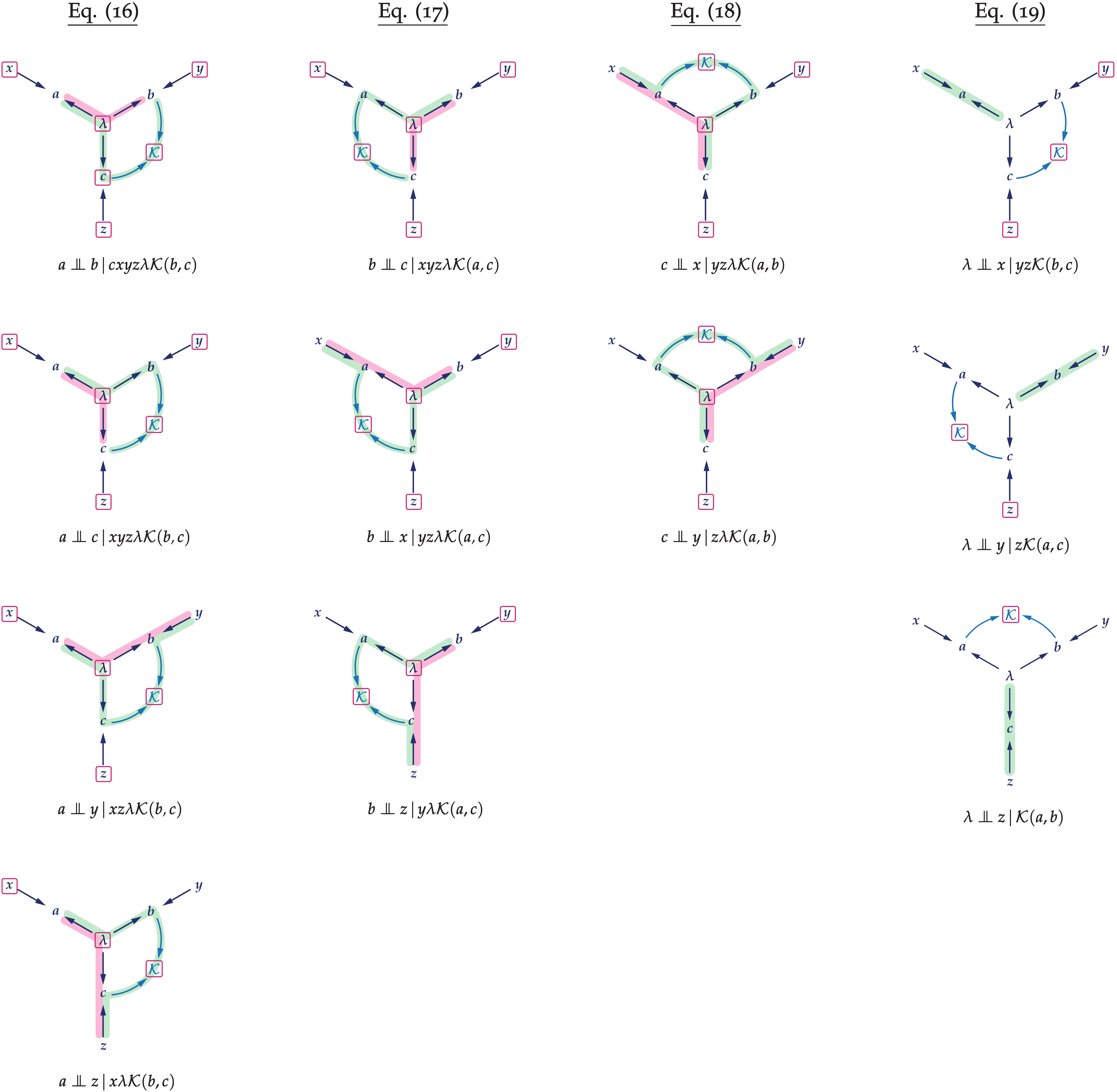}
\caption{\label{Fig_Proof_ABC}{\bf\textsf{\mbox{Graphical proof of \textbf{Theorem~\ref{theorem}} (\textit{three} parties).}}} Each graph illustrates structure of conditioning in Eqs.~(\ref{locality-proof-three-1})\,-\,(\ref{free-choice-proof-three}) depicted by red boxes around the variables. Marked in purple and green are paths joining variables of interest for which the respective independencies are inferred by the $d$-separation \textbf{\textsf{Rules~\ref{1}\,-\,\ref{3}}}. In case of Eqs.~(\ref{locality-proof-three-1})\,-\,(\ref{locality-proof-three-3}) there are always two paths, and all of them are blocked by conditioning on the non-collider $\lambda$ (\textbf{\textsf{Rule~\ref{2}}}). As for  Eq.~(\ref{free-choice-proof-three}), in each case the only path is blocked by the respective collider $a$, $b$ and $c$ (\textbf{\textsf{Rule~\ref{1}}}). Note that dropping one of the arrows in each of the graphs, as allowed by condition in Eq.~(\ref{all-but-one-EQ-three-SI}) and explained in Fig.~\ref{Fig_CausalStructure_ABC}, is not accidental (since otherwise it would open additional paths and spoil the independence pattern).}
\end{figure}

\newpage

\section*{$\bullet$ \,Proof of \textbf{Theorem~\ref{theorem}} for \textit{any} number of parties}

\textit{[The proof follows the lines of reasoning for the case of two and three parties.]}

\begin{proof} We consider the general case of $N$ parties in a Bell experiment with post-selection conforming to the \textit{all-but-one} principle in \textbf{\textsf{Definition~\ref{all-but-one}}}, i.e we have 
\begin{eqnarray}\label{all-but-one-EQ-SI}
\mathcal{K}\,=\,\mathcal{K}(a_{\scriptscriptstyle 1},...\,,\!\not\!{a}_{\scriptscriptstyle k},...\,,a_{\scriptscriptstyle N})\qquad\text{for each \ $k=1,...\,,N$}\,,\quad
\end{eqnarray}
where $\!\not\!{a}_{\scriptscriptstyle k}$ means that $k$-th variable is missing from the list of all outcomes. The causal structure is then given by the graph $\mathcal{G}_{\mathcal{K}}$ in Fig.~\ref{Fig_CausalStructure_N}. Note that condition Eq.~(\ref{all-but-one-EQ-SI}) entails that one of the $N$ arrows coming to $\mathcal{K}$ can be always erased without changing the generated statistics (cf. Fig.~\ref{Fig_CausalStructure_ABC}). In particular, this means that conditional independencies inferred by dropping one of those arrows remain the same.
\begin{figure}[h]
\centering
\includegraphics[width=0.8\columnwidth]{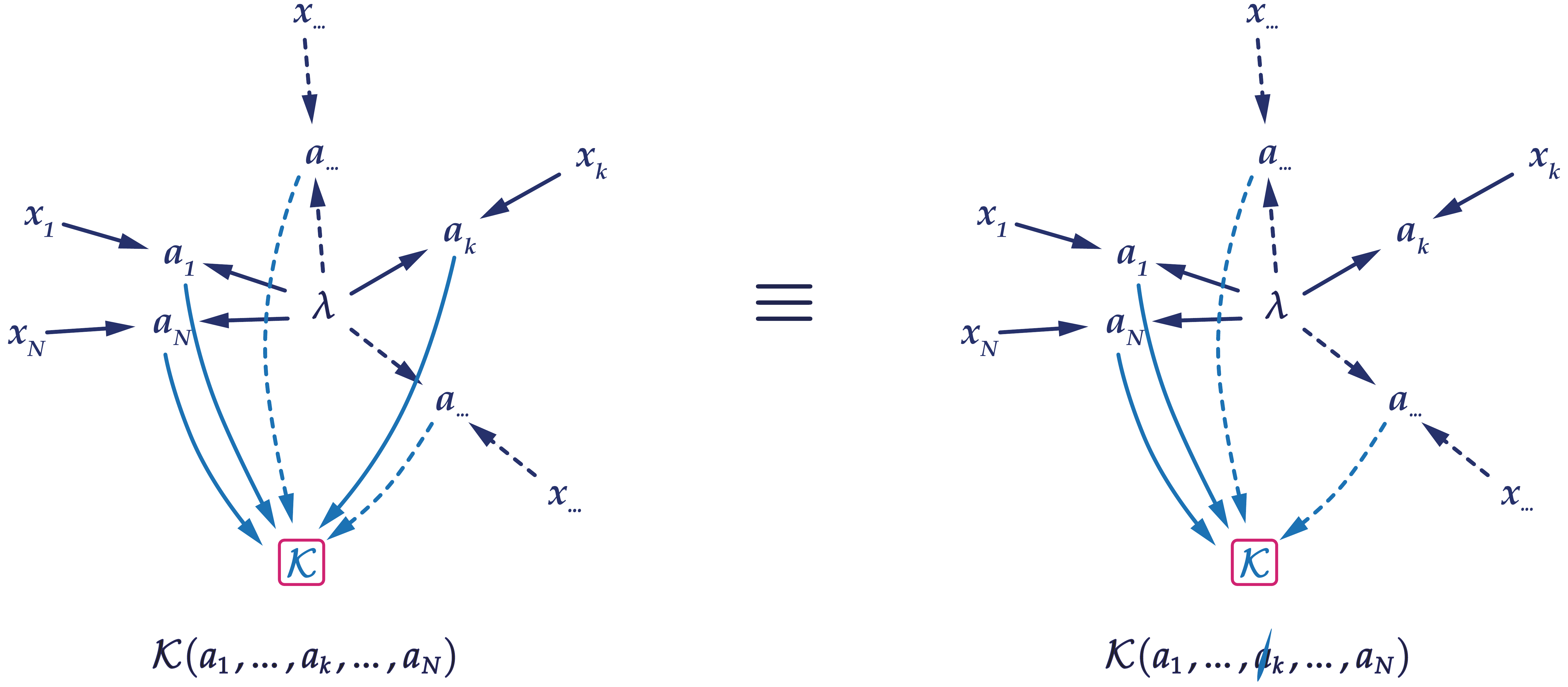}
\caption{\label{Fig_CausalStructure_N}{\bf\textsf{\mbox{Causal structure in a Bell experiment for \textit{any} number parties with post-selection.}}} On the left, unfolded on a plain is the causal graph $\mathcal{G}_{\scriptscriptstyle \mathcal{K}}$ describing  relations between variables in a Bell experiment ($a_{\scriptscriptstyle 1},...\,,a_{\scriptscriptstyle N}$ - outcomes, $x_{\scriptscriptstyle 1},...\,,x_{\scriptscriptstyle N}$ - measurement settings, and $\lambda$ - hidden variable). Variable $\mathcal{K}$ represents post-selection (where the red box means conditioning).  Cf. Fig.~\ref{Fig_CausalStructure}.
On the right, the causal graph with one arrow $a_{\scriptscriptstyle k}\rightarrow\mathcal{K}$ erased which is equivalent to graph $\mathcal{G}_{\scriptscriptstyle \mathcal{K}}$, if the \textit{all-but-one} principle in Eq.~(\ref{all-but-one-EQ-SI}) holds.}
\end{figure}

The proof of \textbf{\textsf{Theorem~\ref{theorem}}} consists of justifying that both Eqs.~(\ref{locality-K}) and (\ref{free-choice-K}) in \textbf{\textsf{Definition~\ref{Safe-post-selection}}} hold under the condition in Eq.~(\ref{all-but-one-EQ-SI}). 

By the repeated use of the chain rule, we get
\begin{eqnarray}
P_{a_1...\,a_N|x_1...\,x_N\lambda\mathcal{K}}\ =\ \prod_{\scriptscriptstyle k\,=\,1}^{\scriptscriptstyle N}\ P_{a_k|a_{k+1}...\,a_Nx_1...\,x_N\lambda\mathcal{K}}\,.
\end{eqnarray}
In order to prove Eq.~(\ref{locality-K}) it remains to check that for each $k=1,...\, ,N$ the following sequence of conditional independencies holds:
\begin{eqnarray}\label{locality-proof-any-1}
\!\!\!\begin{array}{llllll}
P_{a_k|a_{k+1}...\,a_N\,x_1...\,x_N\lambda\mathcal{K}}&=&P_{a_k|a_{k+2}...\,a_N\,x_1...\,x_N\lambda\mathcal{K}}
&&\vspace{0.3cm}\\
&&\quad\quad\text{since}\quad a_{\scriptscriptstyle k}\indep  a_{\scriptscriptstyle k+1}\,|\,a_{\scriptscriptstyle k+2}...\,a_{\scriptscriptstyle N}\,x_{\scriptscriptstyle 1}...\,x_{\scriptscriptstyle N}\lambda\,\mathcal{K}(a_{\scriptscriptstyle1},...\,,\!\not\!{a}_{\scriptscriptstyle k},...\,,a_{\scriptscriptstyle N})\\
&=&\dots\vspace{0.2cm}\\
&=&P_{a_k|a_{l+1}...\,a_N\,x_1...\,x_N\lambda\mathcal{K}}
&&\vspace{0.3cm}\\
&&\quad\quad\text{since}\quad a_{\scriptscriptstyle k}\indep  a_{\scriptscriptstyle l}\,|\,a_{\scriptscriptstyle l+1}...\,a_{\scriptscriptstyle N}\,x_{\scriptscriptstyle1}...\,x_{\scriptscriptstyle N}\lambda\,\mathcal{K}(a_{\scriptscriptstyle1},...\,,\!\not\!{a}_{\scriptscriptstyle k},...\,,a_{\scriptscriptstyle N})\\
&=&\dots\vspace{0.2cm}\\
&=&P_{a_k|x_1...\,x_N\lambda\mathcal{K}}
&&\vspace{0.3cm}\\
&&\quad\quad\text{since}\quad a_{\scriptscriptstyle k}\indep  a_{\scriptscriptstyle N}\,|\,x_{\scriptscriptstyle1}...\,x_{\scriptscriptstyle N}\lambda\,\mathcal{K}(a_{\scriptscriptstyle1},...\,,\!\not\!{a}_{\scriptscriptstyle k},...\,,a_{\scriptscriptstyle N})
\,,
\end{array}
\end{eqnarray}
where $l>k$, and
\begin{eqnarray}\label{locality-proof-any-2}
\!\!\!\!\!\!\!\!\!\!\!\!\begin{array}{llllll}
P_{a_k|x_1...\,x_N\lambda\mathcal{K}}&=&P_{a_k|x_2...\,x_N\lambda\mathcal{K}}
&\quad\text{since}\quad a_{\scriptscriptstyle k}\indep  x_{\scriptscriptstyle 1}\,|\,x_{\scriptscriptstyle2}...\,x_{\scriptscriptstyle N}\lambda\,\mathcal{K}(a_{\scriptscriptstyle1},...\,,\!\not\!{a}_{\scriptscriptstyle k},...\,,a_{\scriptscriptstyle N})\vspace{0.2cm}\\
&=&\dots\vspace{0.2cm}\\
&=&P_{a_k|x_{m+1}...\,x_N\lambda\mathcal{K}}
&\quad\text{since}\quad a_{\scriptscriptstyle k}\indep  x_{\scriptscriptstyle m}\,|\,x_{\scriptscriptstyle m+1}...\,x_{\scriptscriptstyle N}\lambda\,\mathcal{K}(a_{\scriptscriptstyle1},...\,,\!\not\!{a}_{\scriptscriptstyle k},...\,,a_{\scriptscriptstyle N})\vspace{0.2cm}\\
&=&\dots\vspace{0.2cm}\\
&=&P_{a_k|x_{k}\,x_{l+1}...\,x_N\lambda\mathcal{K}}
&\quad\text{since}\quad a_{\scriptscriptstyle k}\indep  x_{\scriptscriptstyle l}\,|\,x_{\scriptscriptstyle k}\,x_{\scriptscriptstyle l+1}...\,x_{\scriptscriptstyle N}\lambda\,\mathcal{K}(a_{\scriptscriptstyle1},...\,,\!\not\!{a}_{\scriptscriptstyle k},...\,,a_{\scriptscriptstyle N})\vspace{0.2cm}\\
&=&\dots\vspace{0.2cm}\\
&=&P_{a_k|x_{\scriptscriptstyle k}\lambda\mathcal{K}}
&\quad\text{since}\quad a_{\scriptscriptstyle k}\indep  x_{\scriptscriptstyle N}\,|\,x_{\scriptscriptstyle k}\lambda\,\mathcal{K}(a_{\scriptscriptstyle1},...\,,\!\not\!{a}_{\scriptscriptstyle k},...\,,a_{\scriptscriptstyle N})
\,,
\end{array}
\end{eqnarray}
where $l>k>m$. These independencies can be justified by the $d$-separation tools applied to the causal graph $\mathcal{G}_{\scriptscriptstyle \mathcal{K}}$. It boils down to the inspection of all paths joining variables in question. See Fig.~\ref{Fig_Proof_N} (cf. Fig.~\ref{Fig_Proof_ABC}). In each case there are $N-1$ possible paths which are all blocked by conditioning on the non-collider node $\lambda$ (\textbf{\textsf{Rule~\ref{2}}}). Like before, we use all conditions in Eq.~(\ref{all-but-one-EQ-SI}) to get the results (as readily seen from Fig.~\ref{Fig_Proof_N}, where in each case a different arrow $a_{\scriptscriptstyle k}\rightarrow\mathcal{K}$ is missing).

Similarly, the proof of Eq.~(\ref{free-choice-K}) boils down to the following conditional independencies:
\begin{eqnarray}\label{free-choice-proof-any}
\begin{array}{llllll}
P_{\lambda|x_1...\,x_N\mathcal{K}}&=&P_{\lambda|x_2...\,x_N\mathcal{K}}
&&\quad\text{since}\quad \lambda\indep  x_{\scriptscriptstyle 1}\,|\,x_{\scriptscriptstyle2}...\,x_{\scriptscriptstyle N}\mathcal{K}(\!\not\!{a}_{\scriptscriptstyle1},a_{\scriptscriptstyle 2},...\,,a_{\scriptscriptstyle N})\vspace{0.2cm}\\
&=&\dots\vspace{0.2cm}\\
&=&P_{\lambda|x_{k+1}...\,x_N\mathcal{K}}
&&\quad\text{since}\quad \lambda\indep  x_{\scriptscriptstyle k}\,|\,x_{\scriptscriptstyle k+1}...\,x_{\scriptscriptstyle N}\mathcal{K}(a_{\scriptscriptstyle1},...\,,\!\not\!{a}_{\scriptscriptstyle k},...\,,a_{\scriptscriptstyle N})\vspace{0.2cm}\\
&=&\dots\vspace{0.2cm}\\
&=&P_{\lambda|\mathcal{K}}
&&\quad\text{since}\quad \lambda\indep  x_{\scriptscriptstyle N}\,|\,\mathcal{K}(a_{\scriptscriptstyle1},...\,,a_{\scriptscriptstyle N-1},\!\not\!{a}_{\scriptscriptstyle N})
\,.
\end{array}
\end{eqnarray}
Here there is only one path joining the relevant variables that is blocked by the respective collider $a_{\scriptscriptstyle k}$ (\textbf{\textsf{Rule~\ref{1}}}). See Fig.~\ref{Fig_Proof_N} (cf. Fig.~\ref{Fig_Proof_ABC}). Note that the lacking arrow $a_{\scriptscriptstyle k}\rightarrow\mathcal{K}$ is crucial, since otherwise it would unblock the respective collider $a_{\scriptscriptstyle k}$ by conditioning on its descendent $\mathcal{K}$ (\textbf{\textsf{Rule~\ref{3}}}).

\end{proof}

\begin{figure*}[h]
\centering
\includegraphics[width=1\columnwidth]{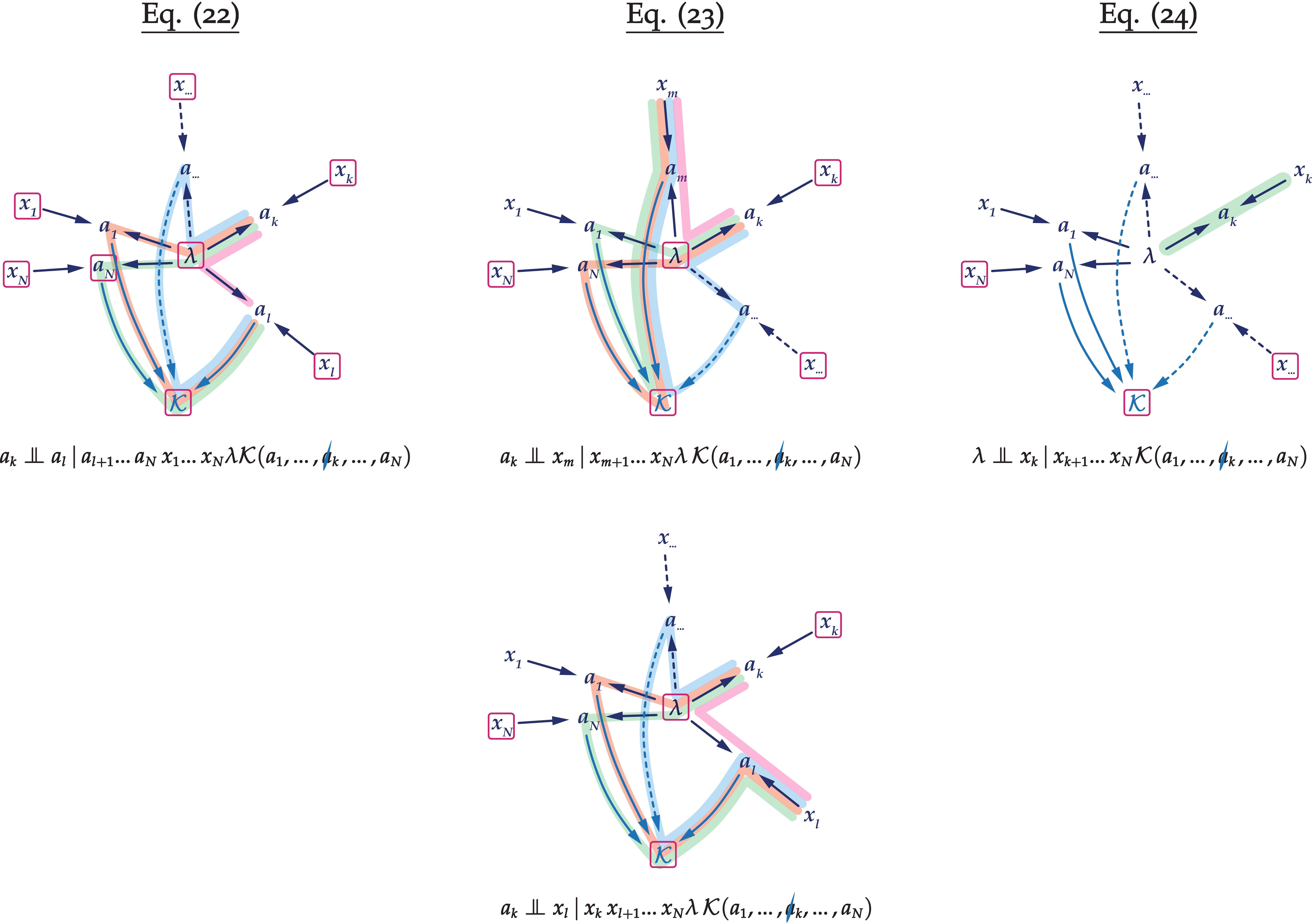}
\caption{\label{Fig_Proof_N}{\bf\textsf{\mbox{Graphical proof of \textbf{\textsf{Theorem~\ref{theorem}}} (\textit{any} number of parties).}}} We give three generic cases used for the justification of conditional independencies in Eqs.~(\ref{locality-proof-any-1})\,-\,(\ref{free-choice-proof-any}). Red boxes around the variables depict the structure of conditioning. Marked in purple, green, orange and blue  are paths joining variables of interest, $a_{\scriptscriptstyle k}$ and $a_{\scriptscriptstyle l}$ (resp. $x_{\scriptscriptstyle l}$), for which the respective independencies are inferred by the $d$-separation \textbf{\textsf{Rules~\ref{1}\,-\,\ref{3}}}. In case of Eqs.~(\ref{locality-proof-any-1}) and (\ref{locality-proof-any-2}) there are $N-1$ paths, and all of them are blocked by conditioning on the non-collider $\lambda$ (\textbf{\textsf{Rule~\ref{2}}}). As for Eq.~(\ref{free-choice-proof-any}), the only path joining $\lambda$ and $x_{\scriptscriptstyle k}$ is blocked by the collider $a_{\scriptscriptstyle k}$ (\textbf{\textsf{Rule~\ref{1}}}). Let us note that the lack of the respective arrow $a_{\scriptscriptstyle k}\rightarrow\mathcal{K}$ in each of the graphs is essential (in order not to introduce unwanted paths in case of Eqs.~(\ref{locality-proof-any-1}) and (\ref{locality-proof-any-2}), or to prevent lifting the block from the collider  by \textbf{\textsf{Rule~\ref{3}}} in case of Eq.~(\ref{free-choice-proof-any})).}
\end{figure*}

\end{document}